\newcommand{\bd}{\begin{displaymath}}
\newcommand{\ed}{\end{displaymath}}
\newcommand{\be}{\begin{equation}}
\newcommand{\ee}{\end{equation}}
\newcommand{\bea}{\begin{eqnarray}}
\newcommand{\eea}{\end{eqnarray}}
\newcommand{\bda}{\begin{eqnarray*}}
\newcommand{\eda}{\end{eqnarray*}}
\newcommand{\ba}{\begin{array}}
\newcommand{\ea}{\end{array}}
\newcommand{\ra}{\rightarrow}
\newcommand{\dd}{\mbox{\rm\,d}}
\begin{document}

\title{Noisy threshold in neuronal models: connections with the noisy leaky integrate-and-fire model.}
\titlerunning{Age-structure in neuronal populations}
\author{G. Dumont        \and
        J. Henry         \and
        C.O. Tarniceriu
}

\institute{G. Dumont \at
              Physics Department, 150 Louis Pasteur
Ottawa, Ontario, Canada K1N 6N5 \\
              \email{gdumont@uottawa.ca}           
           \and
           J. Henry \at
              INRIA Bordeaux Sud Ouest, 200 avenue de la vieille tour,  33405 Talence Cedex, France\\
              \email{jacques.henry@inria.fr}
           \and
           C.O. Tarniceriu \at
           Department of Sciences of "Al. I. Cuza" University,  %
          Lascar Catargi 54, 700107 Ia\c{s}i, Romania\\
          \email{tarniceriuoana@yahoo.co.uk}
}

\date{}

\maketitle
\begin{abstract}
Providing an analytical treatment to the stochastic feature of neurons' dynamics is one of the current biggest challenges in mathematical biology.
The noisy leaky integrate-and-fire model and its associated Fokker-Planck equation are probably the most popular way to deal with
neural variability. Another well-known formalism is the escape-rate model: a model giving the probability that a neuron fires at a
certain time knowing the time elapsed since its last action potential.
This model leads to a so-called age-structured system,
a partial differential equation with non-local boundary condition famous in the field of population dynamics,
where the {\it age} of a neuron is the amount of time
passed by since its previous spike. In this theoretical paper, we investigate the mathematical connection between the two formalisms.
We shall derive an integral transform of the solution to the age-structured model into the solution of the Fokker-Planck equation.
This integral transform highlights the link between the two stochastic processes. As far as we know, an explicit mathematical
correspondence between the two solutions has not been introduced until now.
\end{abstract}

\section{Introduction}

Neurons are strongly noisy. They never respond in the same way under repeated exposure to identical stimuli and it is difficult for
theoreticians  to apply the correct analytical treatment in order to express this variability. Two distinct sources of noise are usually mentioned:
external and internal \cite{noiseNS}. While the external source of noise usually refers to the random fluctuations attributed to
the environment of the neurons, the internal source is mainly imputed to the probabilistic nature of the chemical reactions governing
the firing process of neurons. More precisely, noise is present because a neuron is bombarded by thousands of synaptic inputs, and also
due to the randomness in the openings and closings of the ion channels underlying action potentials \cite{andre01}.

The noisy leaky integrate-and-fire (NLIF) model is a mathematical model that takes into account the stochastic features of
neurons \cite{Burkitt}.
The model is preferred by theoreticians since it can be seen as a simplification of the bio-physiological
Hodgkin-Huxley model \cite{HH}, which is sufficiently detailed to allow a qualitative comparison with physical data obtained via intracranial
recording \cite{Izi} (see also \cite{naud} for a recent discussion about the quality of the neural modeling). Nonetheless, despite
its apparent simplicity, many  questions regarding its dynamics remain open.

By definition, the NLIF model describes a stochastic process, which is given by a Langevin equation plus a discontinuous reset mechanism
to mimic the onset of the action potential (see \cite{Burkitt} and \cite{Izi}).
Starting with the Langevin equation, one can write the well-known associated Fokker-Planck (FP) equation \cite{Gardiner},
that gives the evolution in time of the
density probability to find a neuron's membrane potential in a certain voltage value \cite{longtin01}.

Let us remind that, in mathematical neuroscience, the concept of probability density function has already a long history,
as it can be seen in \cite{rinzel}, \cite{abbott}, and it is used in a variety of contexts. Indeed, assuming the number of neurons
to be infinitely large, one can write the so-called thermodynamics' mean field equation, where the effect of the whole  network on
any given neuron is approximated by a single averaged effect. Under some assumptions and approximations, the equation takes the form
of a nonlinear FP equation. It is in particular pertinent for the simulation of large sparsely connected populations of neurons
\cite{sirovich}, \cite{nykamp}, \cite{us}. Furthermore, this density approach has brought an important added value on the theoretical
understanding of synchronization and brain rhythms.  Particularly, this approach has  been successfully used to understand synchronization
caused by recurrent excitation \cite{DH}, \cite{DH01}, \cite{carillo01},
 by  delayed inhibition feedback \cite{BH01}, by both recurrent excitation and inhibition \cite{B02} and
 by gap junction \cite{brunel01}. On a similar trend, it has been used to study the occurrence of the neural cascade
 \cite{cascade01}, \cite{cascade02} and the
emergence of self criticality \cite{critical} with synaptic adaptation.

In this paper, we do not investigate the effect of interactions among neurons, but focus on the analytical treatment of neural noise.
NLIF model is a popular way to deal with the stochastic aspect of neurons, another way is the escape-rate model \cite{plesser},
\cite{gerstner}.
 The main difference between the two approaches consists in the treatment of noise; while in the NLIF model  the noise acts
on the trajectories, the escape rate model considers deterministic trajectories and the noise is present in expressing the variability
of firings that is modeled in the form of a {\it hazard function}. Therefore, the noisy trajectories with fixed threshold
are replaced
by deterministic trajectories with noisy thresholds. In the equivalent description of the NLIF model as a FP equation
with absorbing boundary condition at the firing threshold, the noise is expressed by the diffusion term of the FP equation; it
has been shown in \cite{plesser} that, in the subthreshold regime, the integrate and fire model with stochastic
input (diffusive noise) can be mapped onto an escape rate model with a certain escape rate.
Starting from this, the equivalence of the FP equation with escape noise and a partial differential equation that describes
the evolutions of {\it refractory densities} has been shown \cite{gerstner}; the last equation
is strikingly similar to those of the well-known age-structured (AS) models and it  gives the
evolution in time of the refractory densities with respect to their refractory state, which is in fact the time elapsed
since the last firing. To underline the above mentioned similarity, we will refer to this variable in this paper as {\it age}.
Age structure in a neural context has been also discussed in \cite{perthame03}.

In our paper, we shall prove that the solution to the AS system can
be transformed via an integral transform into the solution to the FP equation associated to the NLIF model.
The kernel of the integral transform will involve in particular the
notion of {\it survivor function} \cite{g2000}, \cite{gerstner}.  In renewal theory, the hazard is known also as the
{\it age dependent death rate} and expresses the rate of decay of the survivor function \cite{cox}. The concept of
 time dependent interspike interval (ISI) distribution and corresponding survivor function has been considered later \cite{G95}.

In the neuroscience context, the survivor
function, which was introduced initially to describe the probability of a particle to reach a given target, will give the
probability for a neuron to "survive" without firing. We refer again for more about these considerations to \cite{gerstner}, and further analysis on these functions
and the related first passage time problem in the neural context can be found in the review \cite{BN}. First passage time problem
in cellular domains has been investigated in \cite{schuss} and \cite{holcman}.

There is a strong advantage in using an AS formalism: the AS systems have been thoroughly investigated
in the last decades, and many qualitative results of the various forms of AS population models have been obtained.
By proving an equivalence between a membrane  potential density model and an AS model, we will be in position to obtain insights
of the qualitative behavior of the population density function such as long time behavior, stability, bifurcation points, and so on.

The paper is structured as follows: we remind in the first two sections the NLIF model and its
associated FP equation, and we present
some simulations of the models. Next, some considerations about the survivor function, the interspike interval distribution and
the first passage time
problem are presented. We introduce in the following the stochastic threshold model and
the corresponding AS system. We prove our main theoretical results in the last two sections: first we
establish in Proposition \ref{ITransformation} an integral correspondence between
its solution and the solution to FP equation. We also consider the stationary case, and show that,
by our integral transform, we obtain an expression of the corresponding
stationary solution which verifies the stationary
 FP equation.  Last, we show the asymptotic convergence of the solution to FP system to the
 solution of the stationary system defined through our transform.
The existence of an inverse transform
to the one introduced here  as well as extensions of the problem to the case of
time-dependent parameters of the systems are subject to our further investigations .

 Before getting started, let us summarize in Table
\ref{TabNotation} the main mathematical notations and their associated biophysiological meaning used throughout this document.

\begin{table}[h]
\begin{center}
\begin{tabular}{|c||l|}
  \hline
  \bf  Notation  &  \bf Biophysiological interpretation \\
  \hline
  $v(t)$        & Neuron's membrane potential  \\
  $v_r$       & Reset potential \\
  $\mu$       & Bias current \\
  $\sigma$       & Noise intensity \\
  $p(t,v)$    & Population density with respect to potential \\
  $r(t)$    & Neuron's firing rate \\
  $a(t)$         & Neuron's {\it age}, i.e. time elapsed since the last spike \\
  $q(a,v)$   & Joint probability density of the membrane's potential and neuron's {\it age}\\
  $ISI(a)$        & First passage time \\
  $S(a)$        & Age dependent death rate  \\
  $n(t,a)$   & Population density for the age-structured model \\
  \hline
\end{tabular}
\end{center}
\caption {Main notations used throughout this paper and their biophysiological interpretations}\label{TabNotation}
\end{table}

\section{The noisy leaky integrate and fire model}
The NLIF model is a well known model in the field of computational neuroscience \cite{Izi}.
The model consists in an ordinary differential
equation describing the subthreshold dynamics of a single neuron membrane's potential and the onset of an action potential described
by a reset mechanism: a spike
occurs whenever a given threshold $V_{T}$ is reached by the membrane potential variable $V$. Whenever the firing threshold is reached,
it is considered that a spike has been fired and the membrane potential is instantaneously reset to a given value $V_R$.
The dynamics of the subthreshold potentials are given by

\begin{equation*}
\tau \frac{d}{dt}V(t)=-g (V(t)-V_L)+\eta (t), \\
\end{equation*}%
where $V(t)$ is the membrane potential at time $t$, $\tau$ is the membrane capacitance, $g$ - the leak conductance,
$V_L$ - the reversal potential and $\eta (t)$ - a gaussian white noise, see \cite{B01} and \cite{abbott01} for the history of the model,
\cite{Burkitt} for a recent review and see \cite{Izi} for other spiking models. In what follows, we will use a normalized
version of the above equation, i.e.
we define $\mu$ as the bias current and  $v$ the membrane's potential which will be given by

\begin{equation*}
\mu=   \frac{V_L}{V_T}, \quad v=\dfrac{V}{V_T}, \quad v_r=\frac{V_R}{V_T}.
\end{equation*}%
After rescaling the time in units of the membrane constant $g/ \tau $, the normalized model reads

\begin{equation}
\label{IF}
\left\{
\begin{array}{l}%
\frac{d}{dt}v(t)=\mu-v(t)+\xi(t) \\
\text{If} \quad v>1 \quad \text{then} \quad v= v_r.
\end{array}%
\right.
\end{equation}%
Again, $\xi(t)$ is a Gaussian white noise stochastic process with intensity $\sigma$:

\begin{equation*}
 \langle \xi(t) \rangle =0, \quad  \langle \xi(t)\xi(t')  \rangle =\sigma \delta(t-t').
\end{equation*}%

In Fig. \ref{FigIF}, a simulation of the neuron model (\ref{IF}) is presented. The three panels correspond to
the same simulation with different level of noise. As expected, when the stochastic coefficient is increased, the corresponding
dynamics become much more irregular. Note that for $\mu$ small enough, the equilibrium of the membrane potential will be located
under the threshold.
In this situation, the neuron will fire only due to the stochastic Brownian motion.
We refer to this situation as to a subthreshold regime. In a real-world setting, such situation appears in a balanced neural network
for instance, when the excitatory and inhibitory pre-synaptic inputs cancel out.

\begin{figure}[t]
\begin{center}
    \includegraphics[width=12.5cm]{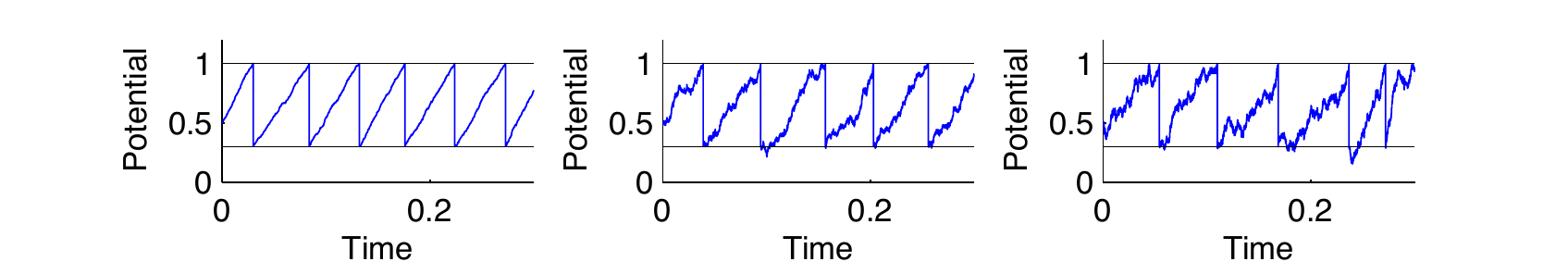}
   \caption{Simulation of the neuron model (\ref{IF}) for different values of the noise coefficient.
   The parameters of the simulation are: $v_r= 0.3$, $\mu=20$, and $\sigma=  0.1$ for the first simulation,  $\sigma=  0.4$
   for the second simulation, $\sigma=  0.6$ for the third simulation. }\label{FigIF}
      \end{center}
\end{figure}

\section{The population density function (Fokker-Planck formalism)}

Considering a population of neurons that are individually described by the stochastic equation (\ref{IF}),
the evolution of the population density function
 has been proven to satisfy the FP equation.
 We remind that the FP equation has been used in two different contexts in mathematical neuroscience:
 to model the evolution of both probability density function and
population density function.
For more considerations about the link between the two approaches we
refer to \cite{nykamp}, \cite{knight}, \cite{K2000}, \cite{BH01}.

In this paper we shall use both formalisms: we shall consider a density of neurons characterized by a population
density function, denoted here by $p(t,v)$, which satisfies the FP equation,
and each neuron of the given population has the evolution
of the potential of the membrane given by the NLIF model. Then, the probability density function of each neuron to be at a certain
voltage at a given time  will be described by the same FP equation \cite{Gardiner},
this time considered only in an inter-spike interval, as we shall see in the next section.

 This equation  is a conservation law taking into account three phenomena modeled by: a drift term due to the continuous
 evolution in the NLIF model, a diffusion term due to the noise and a term due to the reset to $v_r$ right after the firing process.
 Let $r(t)$ be the firing rate of the population, i.e. the flux through the threshold.
Then, the dynamics of the population density $p(t,v)$ is:

\begin{equation}\label{model}
\frac{\partial }{\partial t}p(t,v)+\overbrace{\frac{\partial }{\partial v}[(\mu -v)p(t,v)]}^{\text{Drift}}
-\overbrace{ \frac{ \sigma ^2}{2}     \frac{\partial^2 }{\partial v^2}p(t,v) }^{\text{Diffusion}}  =
\overbrace{\delta (v-v_{r}) r(t)}^{\text{Reset}}.\quad
\end{equation}
We show in Fig. \ref{State_Potential} a schematic representation of the state space for the FP equation (\ref{model}).
\begin{figure}[t]
\begin{center}
    \includegraphics[width=6cm]{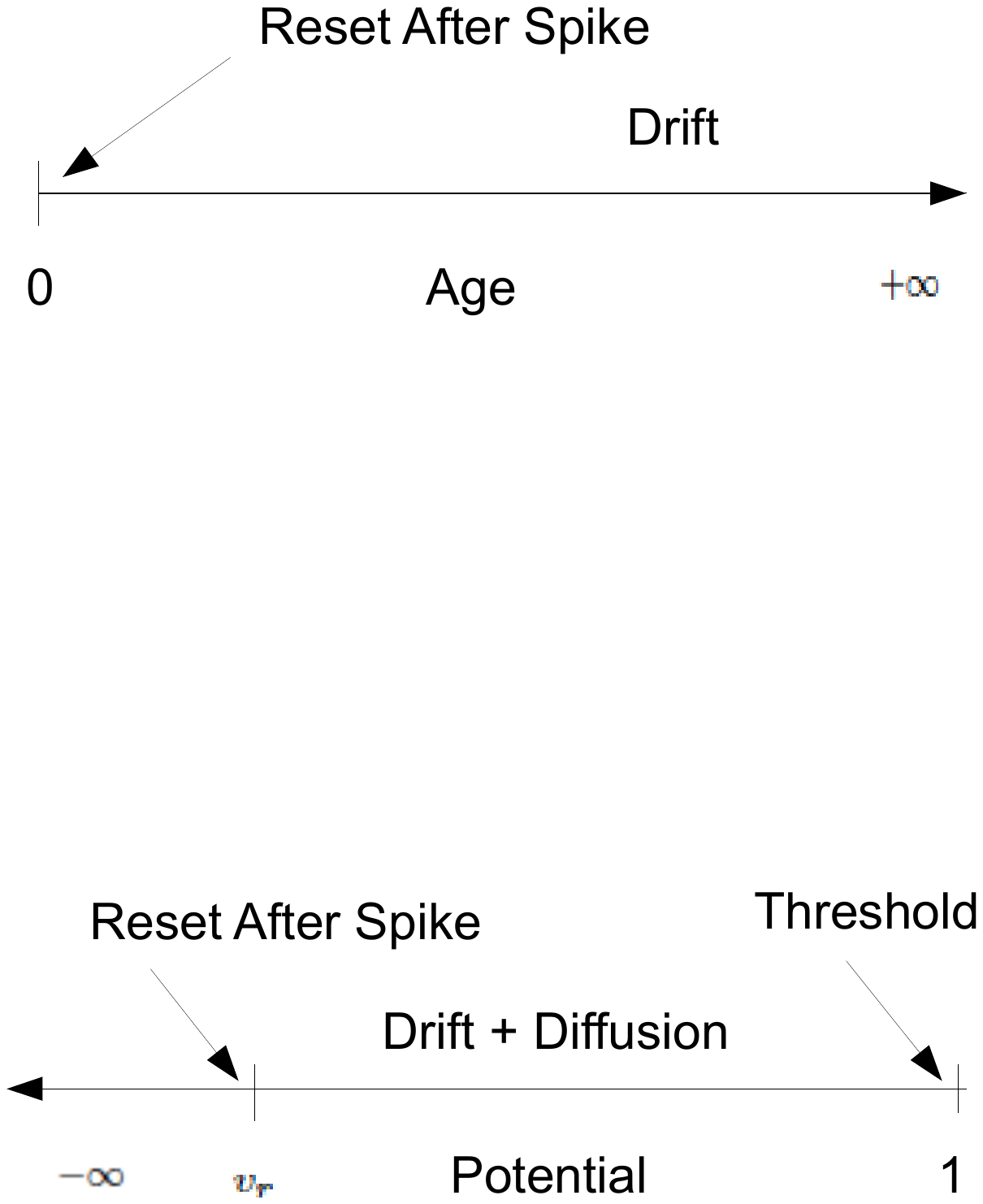}
  \caption{Schematic representation of the state space  for the FP equation (\ref{model}). }\label{State_Potential}
      \end{center}
\end{figure}
Because a neuron reaching the threshold fires an action potential and is instantaneously reset to $v_r$, we impose an absorbing
condition at the threshold (\cite{gillespie}), namely

\begin{equation}\label{BC1}
p(t,1)=0, \forall t\geq 0.
\end{equation}
Usually, a reflecting boundary is  imposed at $v=-\infty$ in order to assure the conservation property

\begin{equation}\label{BC2}
\lim\limits_{v \to -\infty} (-\mu +v)p(t,v) +\frac{ \sigma ^2}{2}     \frac{\partial }{\partial v}p(t,v) =0, \forall t\geq 0 .
\end{equation}
Of course, an initial distribution of the membrane potential is taken as a given function:
\be\label{IC}
p(0,v)=p_0(v), \quad v\in(-\infty,1].
\ee
As previously said, the firing rate $r(t)$ is defined as the flux at
 the threshold and, due to the boundary condition for the population density function in this value, is given by

\begin{equation}
\label{firingP}
 r(t)  = -\frac{ \sigma ^2}{2}     \frac{\partial }{\partial v}p(t,1).
\end{equation}
Using the boundary condition and the expression of $r(t)$ given by (\ref{firingP}), one can easily check the conservation
property of the equation (\ref{model}) by  directly integrating it on the interval $(-\infty,1)$, so that,
if the initial condition  satisfies
\begin{equation}
\label{normalisation_ini}
\int  _{-\infty}^{1} p_0(v)\,dv=1,
\end{equation}
then the solution to (\ref{model})-(\ref{firingP}) necessarily satisfies the normalization condition

\begin{equation}
\label{normalisation}
\int  _{-\infty}^{1} p(t,w)\,dw=1.
\end{equation}
Despite its "weird" singular source term, the existence of a solution to the above model has been proved in \cite{carrillo02}.
The FP equation can be written as a Stephan problem and an implicit solution can be given in the form of an integral equation.
 Note that in the literature, the equation (\ref{model}) is often exposed in terms of a conservation law.
In this setting, the flux that we denote $ \mathfrak{J}(t,v)$ is defined as

\begin{equation*}
-\mathfrak{J}(t,v)= (-\mu +v)p(t,v) +\frac{ \sigma ^2}{2}     \frac{\partial }{\partial v}p(t,v).
\end{equation*}
Therefore, the evolution in time of the density function $p$ is given by
\begin{equation*}
\frac{\partial}{\partial t}p(t,v)=-\frac{\partial}{\partial v} \mathfrak{J}(t,v).
\end{equation*}
In this formulation, the singular source term that appears in (\ref{model}) can be seen as a flux discontinuity,
see \cite{B02} for instance,

\begin{equation*}
\lim\limits_{v \to v_r^+}  \mathfrak{J}(t,v) -  \lim\limits_{v \to v_r^-}  \mathfrak{J}(t,v) = \mathfrak{J}(t,1).
\end{equation*}

\begin{figure}[t]
\begin{center}
    \includegraphics[width=12.5cm]{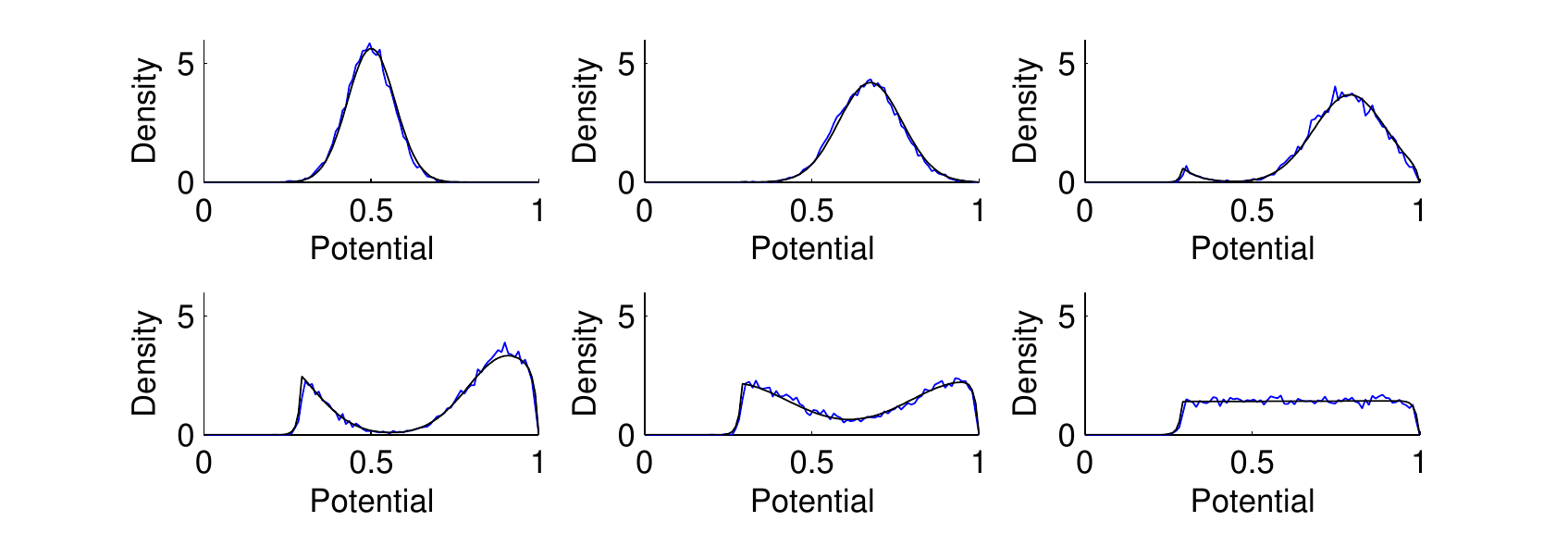}
   \caption{Simulations of the FP equation (\ref{model})-(\ref{firingP}) and of the stochastic process (\ref{IF}):
   black curve for the FP equation,
   blue curve for the stochastic process. A gaussian was taken as initial condition; the parameters of the simulation are:
   $v_r=0.3$, $\mu=20$, $\sigma =0.4$. The plots show the evolution in time of the solution at $t=0$, $t=0.1$, $t=0.3$, $t=0.5$,
    $t=0.7$, $t=7$.}\label{Fig1}
      \end{center}
\end{figure}

We present in  Fig \ref{Fig1} a simulation of the FP model (\ref{model})-(\ref{firingP}). The numerical results are compared
with  Monte Carlo simulations for the stochastic NLIF model (\ref{IF}).
In  Fig \ref{Fig1},  the black curve corresponds to the FP equation (\ref{model})-(\ref{firingP}) and the blue curve to
the stochastic process (\ref{IF}). A Gaussian was taken as  initial condition (see the first panel of Fig. \ref{Fig1}).
Under the drift and the diffusion effects, the density function gives a non zero flux at the threshold. This flux is reset to $v_r$
according to the reset process. This effect can be seen clearly in the third panel of the simulation presented in Fig. \ref{Fig1}. Asymptotically the solution  reaches a stationary density. The steady state is shown in the last panel of
Fig. \ref{Fig1}. Note that the stationary state can be easily computed (we remind its expression later in the text).
One can show the convergence of the solution towards the stationary density using the general relative entropy principle.

In Fig. \ref{Fig2} a comparison of the firing rate (\ref{firingP}) computed via the FP formalism
(\ref{model}) and via the stochastic model (\ref{IF}) is represented. Again the blue curve is obtained by  direct simulations
of the stochastic process
 (\ref{IF}), and the black curve corresponds to the simulations of the FP model (\ref{model})-(\ref{firingP}). To be more precise,
 we also show a raster plot depicting the spike timing of the neurons for each simulation run. In the three different simulations that we present, we have varied the drift term $\mu$.

 \begin{figure}[t]
 \begin{center}
 \includegraphics[width=12.5cm]{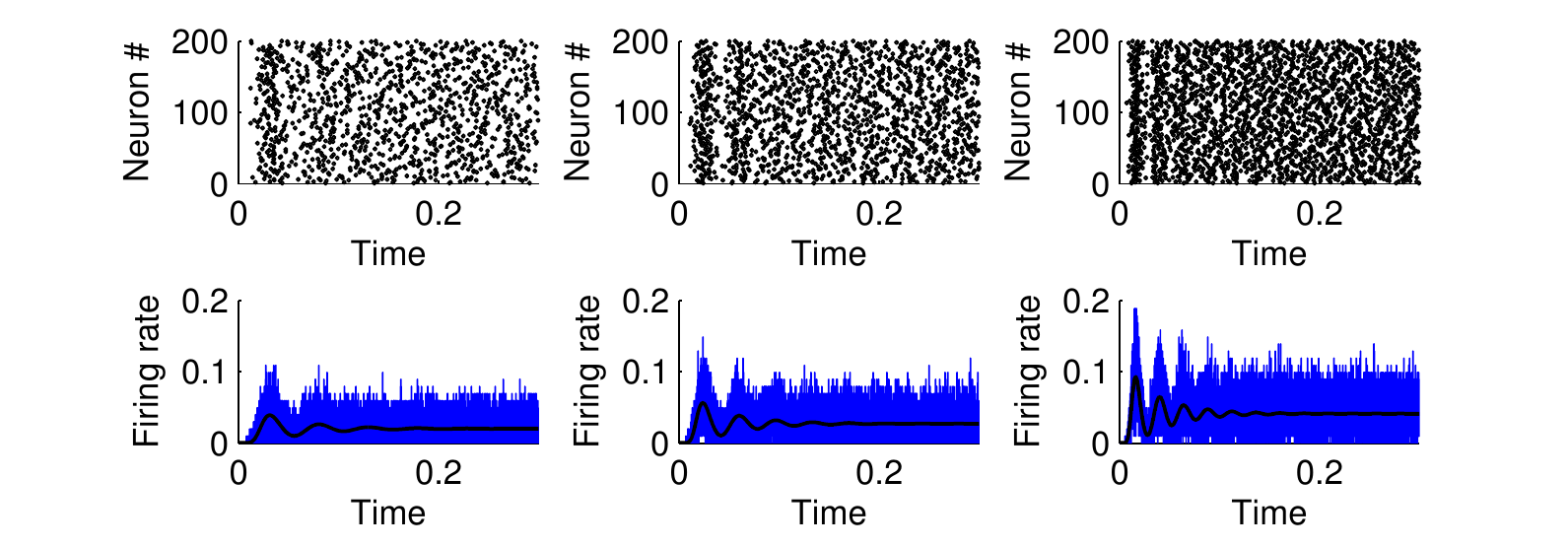}
   \caption{Simulation of the firing rate of the neuron (\ref{firingP}) via the FP formalism (\ref{model})-(\ref{firingP}),
   black curve, and the stochastic process (\ref{IF}), blue curve. The parameters of the simulation are: $v_r=0.3$,
   $\mu = 15$ and $\sigma =0.4$ for the first simulation, $\mu = 20$ and $\sigma=0.4$ for the second simulation
   and $\mu = 30$ and $\sigma =0.4$ for
   the third simulation. We also show the raster plot depicting the spike timing of $200$ neurons.}
   \label{Fig2}
      \end{center}
\end{figure}

The stationary state of the FP equation is known from decades. A straightforward computation shows that the
steady state $p_{\infty}(v)$ is given by

\begin{equation}\label{StationaryP}
p_{\infty}(v)= \frac{2r_{\infty} }{\sigma^2}  e^{-\dfrac{(v-\mu)^2}{\sigma^2}}  \int_{\max (v,v_r)}^1    e^{\dfrac{(w-\mu)^2}{\sigma^2}}\,dw,
\end{equation}
with $r_{\infty}$ the corresponding stationary firing rate (see Fig \ref{Fig2}). The latter is determined by the normalization
condition:

\begin{equation}\label{StationaryRate}
r_{\infty}^{-1}= \frac{2 }{\sigma^2} \int_{- \infty}^1 e^{-\dfrac{(v-\mu)^2}{\sigma^2}}  \int_{\max (v,v_r)}^1
e^{\dfrac{(w-\mu)^2}{\sigma^2}}\,dw \,dv.
\end{equation}
These expressions are well-known and details can be found in \cite{Ermentrout} for example.

\section{The inter-spike interval and the first passage time}

In the following we will define the {\it age} of a neuron as the time passed since its
last firing. {\it Age} is a somehow  forced notion in this context,
but we have chosen to use it due to the similarity of the model that we will present in the next section
to those from the AS systems theory. The evolution of a probability density function for a neuron's membrane potential
to be at {\it age} $a$ in the potential value $v$, $q(a,v)$, is given by a similar  FP equation:

\begin{figure}[t]
\begin{center}
    \includegraphics[width=12.5cm]{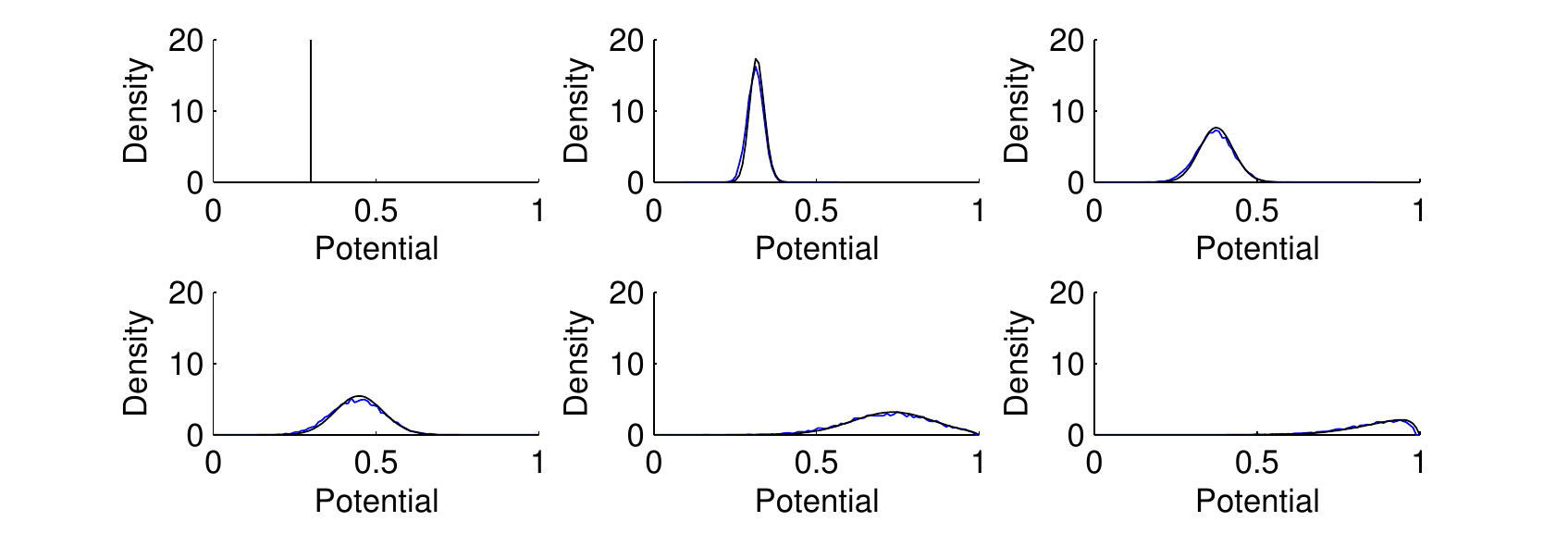}
   \caption{Simulation of the model (\ref{FP}) - (\ref{ic}), black curve, and the stochastic process (\ref{IF}), blue curve.
    A Dirac mass at the reset potential was taken as initial condition. The parameters of the simulation are $v_r=0.3$, $\mu=20$,
    $\sigma =0.4$.
    The plots show the evolution in time ({\it age}) of the solution at $a=0$, $a=0.1$, $a=0.3$, $a=0.5$, $a=0.7$, $a=7$.}\label{Fig3}
      \end{center}
\end{figure}

\begin{equation}\label{FP}
\frac{\partial }{\partial a}q (a,v)+\overbrace{\frac{\partial }{\partial v}[(\mu -v)q (a,v)]}^{\text{Drift part}}
-\overbrace{ \frac{ \sigma ^2}{2}     \frac{\partial^2 }{\partial v^2}q (a,v) }^{\text{Diffusion}}  =0,
\end{equation}
again with an absorbing boundary condition for the threshold value $v=1$
\begin{equation}\label{ba}
q(a,1)=0, \forall a\geq 0,
\end{equation}\label{br}
and a reflecting boundary condition at the boundary $v=-\infty$
\begin{equation} \label{br00}
\lim\limits_{v \to -\infty} (-\mu +v)q(a,v) +\frac{ \sigma ^2}{2}     \frac{\partial }{\partial v}q (a,v) =0, \forall a\geq 0.
\end{equation}
Since a neuron firing an action potential is reset to $v_r$, we consider the initial density given by
\be\label{ic}
q(0,v)=\delta(v-v_r), \quad v\in(-\infty,1),
\ee
where $\delta$ is the Dirac distribution.
As for the equation  (\ref{model}), the equation (\ref{FP})
is often represented in terms of an integral equation. In this setting, the flux that we denote
$ \mathfrak{F}(a,v)$ is defined as

\begin{equation*}
-\mathfrak{F}(a,v)= (-\mu +v)q(a,v) +\frac{ \sigma ^2}{2}     \frac{\partial }{\partial v}q(a,v).
\end{equation*}
Therefore, the evolution in time of the density function $q$ is given by
\begin{equation*}
\frac{\partial}{\partial a}q(a,v)=-\frac{\partial}{\partial v} \mathfrak{F}(a,v).
\end{equation*}

Note that, in the case considered here, the re-injection of the (probability) flux to the
reset value (right hand side of equation (\ref{model})) is not considered, therefore the above model represents the evolution
of the probability density of neurons before firing, i.e. in an inter-spike interval.
The interpretation is that, once the neuron
fired, it becomes of  age zero, and the source term $\delta(v-v_r)$ as initial condition can be understood intuitively
as that, right after a spike the membrane potential
is $v_r$ with probability $1$.
It should be stressed that as the bias current $\mu$ and the noise intensity $\sigma$ do not depend on time,
the probability density $q$ does not depend on time but only on  age $a$.
The flux at the threshold value, which is again given only by the diffusive part of the flux,
is a measure of great interest since it gives the ISI distribution (for a neuron having
at  age $a=0$ the potential $v_r$):

\be\label{ISI}
ISI(a)=-\frac{ \sigma ^2}{2}     \frac{\partial}{\partial v}q(a,1), \forall a\geq 0.
\ee
Until now, no analytical solution of the ISI curve has been found.

\begin{figure}[t]
\begin{center}
       \includegraphics[width=12.5cm]{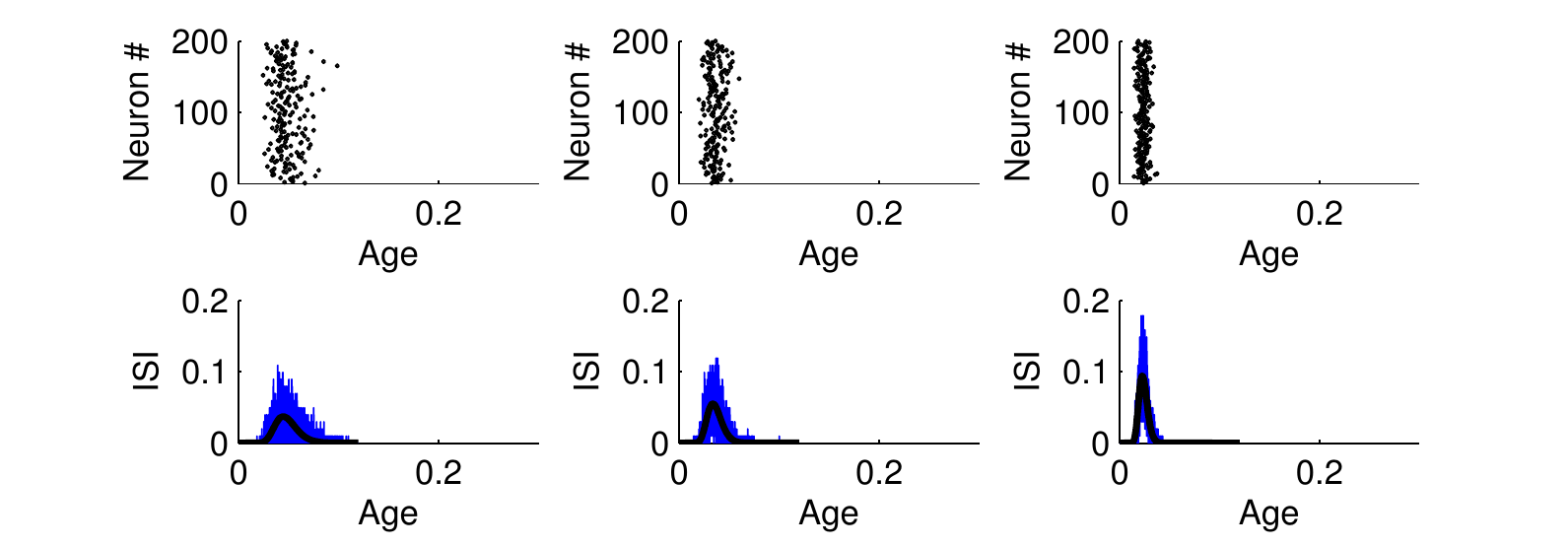}
   \caption{Simulation of the ISI  function (\ref{ISI}) via the FP formalism (\ref{FP}), black curve,
   respectively the stochastic process (\ref{IF}), blue curve. The parameters of the simulation are $v_r=0.3$,
   $\mu = 15$ and $\sigma =0.4$ for the first simulation,  $\mu = 20$ and $\sigma=0.4$ for the second simulation
   and $\mu = 30$, $\sigma =0.4$ for the third simulation.
   We also show the raster plot depicting the spike timing of $200$ neurons.}\label{Fig4}
      \end{center}
\end{figure}

We present in  Fig. \ref{Fig3} a simulation of the problem (\ref{FP})--(\ref{ic}).
Again, we have made a comparison between the stochastic
process (blue curve) and the evolution in time of the density function (black curve). The simulation starts
with a Dirac mass as initial condition (see first panel of Fig. \ref{Fig3}). Under the influence of the drift term and the
diffusion process (gaussian white noise), the density function spreads to the threshold. This is clearly seen in the upper plot
of Fig. \ref{Fig3}. At last, as expected, it converges to a zero density
(see the last panel in
Fig. \ref{Fig3}).
In  Fig. \ref{Fig4}, we make some different simulations of the ISI curve. As before, the blue curve  corresponds
to the stochastic process simulated via a Monte Carlo method, and the black curve - to the deterministic process (\ref{FP}).
We present here three different
panels corresponding to three different simulations where the bias current $\mu$ was increased. We also present
a raster plot depicting the spike timing of each neuron simulated via the NLIF model (see upper panels of Fig \ref{Fig4}).
It can be clearly seen that, increasing the intensity current leads to a more concentrated ISI density.
The ISI curve starts at zero, which means
that right after spiking the neuron needs some time before spiking again. Then, the ISI curve increases and, after reaching a
maximum, it decreases rapidly. This is  depicted by the raster plot presented in Fig \ref{Fig4}.

The first passage time problem is intimately related to the FP equation. Starting from the Chapman-Kolmogorov equation,
it has been shown that the probability to find the state (potential) of a neuron at time $t$ in a certain value v, is the solution
to the FP or Kolmogorov's forward equation. From
 it, one can derive the equation that describes the evolution in time of the probability for a neuron that started at time $0$ from
a potential value $v$ to not have reached yet the value threshold, named {\it survival probability density}. This equation
is known as Kolmogorov's backward equation, and the choice of boundary conditions has been discussed in \cite{Gardiner}.

\section{Noisy threshold model}
In what follows, we shall use the concept of survival probability or {\it survivor function} as in \cite{gerstner}. In our case,
this function will be only  age-dependent since we shall consider it for the case of a neuron that starts at age zero from
the reset potential $v_r$.

Namely, if $q (a,v)$ is the solution to (\ref{FP})-(\ref{ic}), then the following quantity

\bd
\int_{-\infty}^1 q (a,v)\dd v
\ed
stands for the  probability of survival at  age $a$ of a neuron that started at  age $0$ from the position $v_r$.
Again, "survival" at  age $a$ means that up to that time, the potential of the neuron's membrane has not reached yet
the threshold value. Then, the rate of decay of the survivor function

\begin{equation}
\label{SF}
S(a)=-\dfrac{\frac{d }{d a}  \int_{-\infty}^{1}q(a,w) \,dw}{\int_{-\infty}^{1}q(a,w) \,dw}
\end{equation}%
represents the rate at which the threshold is reached and it has been called {\it age-dependent death rate} or {\it hazard}.
$S(a)$ has the interpretation that, in order to emit a spike, the neuron has to "survive" without firing in the
interval $(0, a)$ and then fire at {\it age} $a$.

\begin{figure}[t]
\begin{center}
    \includegraphics[width=12.5cm]{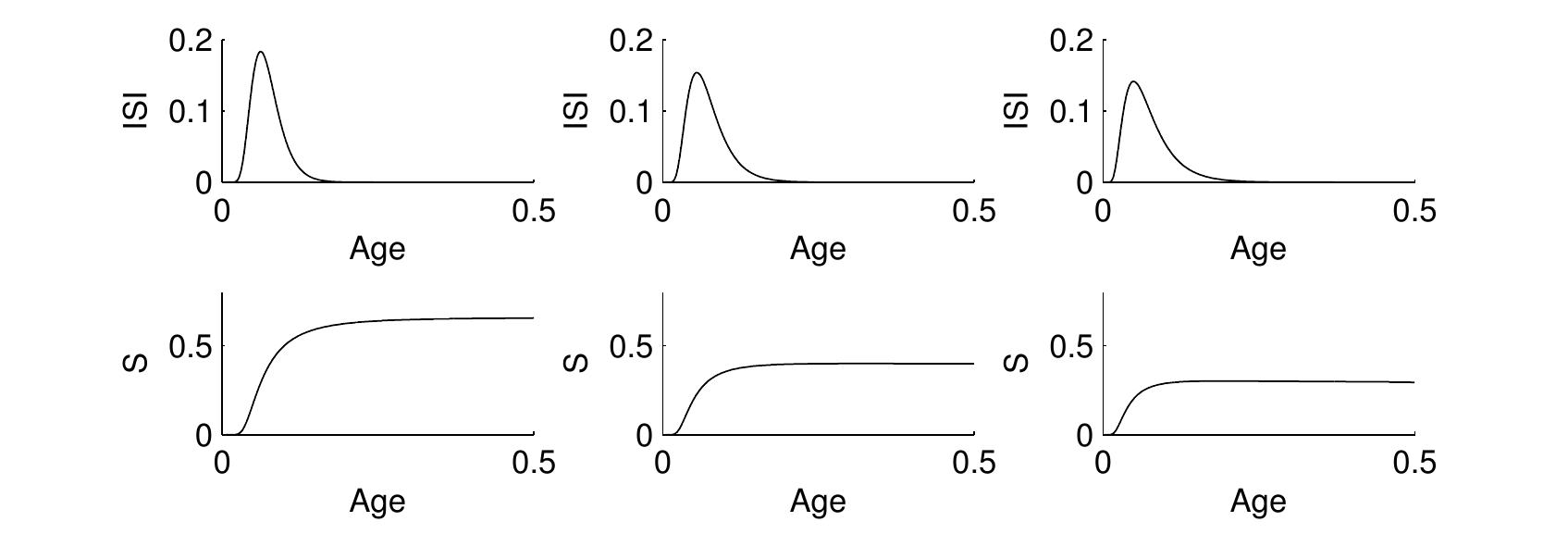}
   \caption{Simulation of the function $S(a)$ given by (\ref{SF}) in the lower panels and its corresponding
   $ISI$ given by (\ref{ISI}) in the upper panels.
    The parameters of the simulation are $v_r=0.7$, $\mu=5$, $\sigma =0.1$, $\sigma =0.2$, $\sigma =0.3$.}\label{Fig5}
      \end{center}
\end{figure}

In  Fig. \ref{Fig5}, numerical simulations of the  age-dependent death rate for different parameters is presented.
Let us notice that $S$
defines clearly a positive function that converges toward a constant. Indeed, $q(a,v)$ has the same asymptotic behavior as
\begin{equation*}
e^{-\lambda a}q(v),
\end{equation*}%
which implies that
\begin{equation*}
\lim\limits_{a \to +\infty } S(a)=\lambda,
\end{equation*}%
with $\lambda$ the dominant eigenvalue of the operator of the stationary FP equation and $q$ the
corresponding eigenvector (see \cite{ost} for example).

Note that $S$ can also be expressed in terms of the $ISI$ function given above
\begin{equation*}
S(a)=-\dfrac{ISI(a)}{1-\int_0^a ISI(s) \, da}
\end{equation*}%
which is  the expression that we used in our numerical estimations of $S$.

We can now define properly the new stochastic process. The model is given by the evolution of the  age of the neuron plus a
stochastic reset mechanism to take into account the initiation of an action potential, and it is

\begin{equation}
\label{TEM}
\left\{
\begin{array}{l}%
\displaystyle\frac{d}{dt}a(t)=1 \\
\text{The spiking probability in $(t,t+dt)$ is given by $S(a(t))dt$}\\
\text{If a spike is triggered then $a(t)= 0$},
\end{array}%
\right.
\end{equation}%
where $S$ is the  age-dependent death rate given by (\ref{SF}). In this model, the  age of the neuron follows
a trivial deterministic process, but the firing threshold is stochastic since at each time the neuron can fire.
When this happens, its  age is reset to zero.
As reminded before, the difference between the models (\ref{IF}) and (\ref{TEM}) is that,
in the NLIF the dynamics are stochastic and the reset process is deterministic while in the escape-rate model  above,
the dynamics are deterministic but the reset mechanism is stochastic.

\begin{figure}[t]
\begin{center}
    \includegraphics[width=12.5cm]{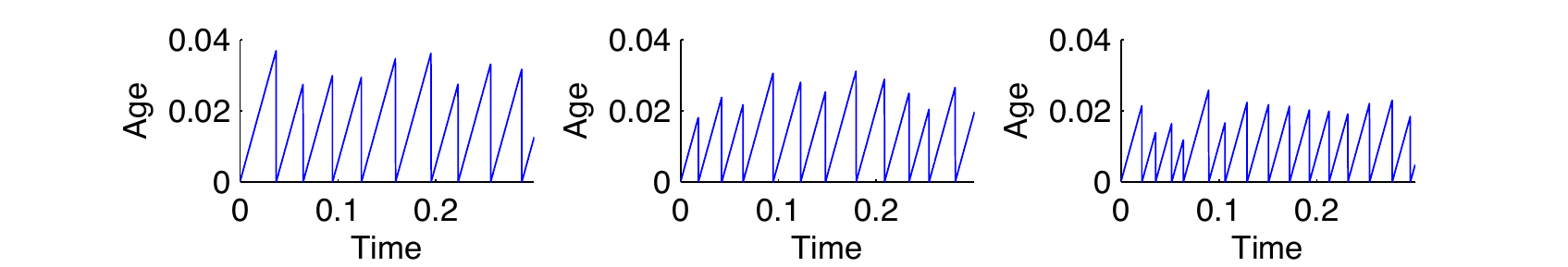}
   \caption{Simulation of the escape rate model (\ref{TEM}).
    The parameters of the simulation are $v_r=0.7$, $\mu=5$, $\sigma =0.1$, $\sigma =0.2$, $\sigma =0.3$.}\label{Fig6}
     \end{center}
\end{figure}

 As we pointed out in the introduction, the escape-rate models have been introduced in \cite{plesser} in order to arrive to
 more tractable models from mathematical point of view. It has been shown here that in the subthreshold regime for integrate
 and fire neurons, the diffusive noise
 can be replaced by a hazard noise (noisy threshold) described by a certain escape rate. More considerations about viable choices
of  age-dependent death rates as well as the derivation of the refractory-densities model that we will remind in the next section
can be found in \cite{gerstner}.

We present in  Fig. \ref{Fig6} a numerical simulation of the stochastic process defined by
 (\ref{TEM}) for different  age-dependent death rates. Note that the neuron never fires exactly at the same  age,
since its probability to fire (escape) is purely stochastic.

\section{The population density function (age-structure formalism)}

We can now introduce the AS model in the same way as it has been done in
\cite{gerstner}, \cite{perthame03} and \cite{perthame04}.

The model describes the evolution in time of the
population density function with respect to the  age of a neuron in the following way:
denoting by $n(t,a)$ the density of neurons at time $t$ at  age $a$, then the evolution of
$n$ is

\begin{equation}\label{age}
\frac{\partial }{\partial t}n(t,a)+ \overbrace{\frac{\partial }{\partial a}n(t,a)}^{\text{Drift part}}+
\overbrace{S(a)n(t,a)}^{\text{Spiking term}}=0.\\
\end{equation}%
In Fig. \ref{Age_Potential}, a schematic representation of the state space of the AS equation (\ref{age}) is presented.

\begin{figure}[t]
\begin{center}
   \includegraphics[width=6cm]{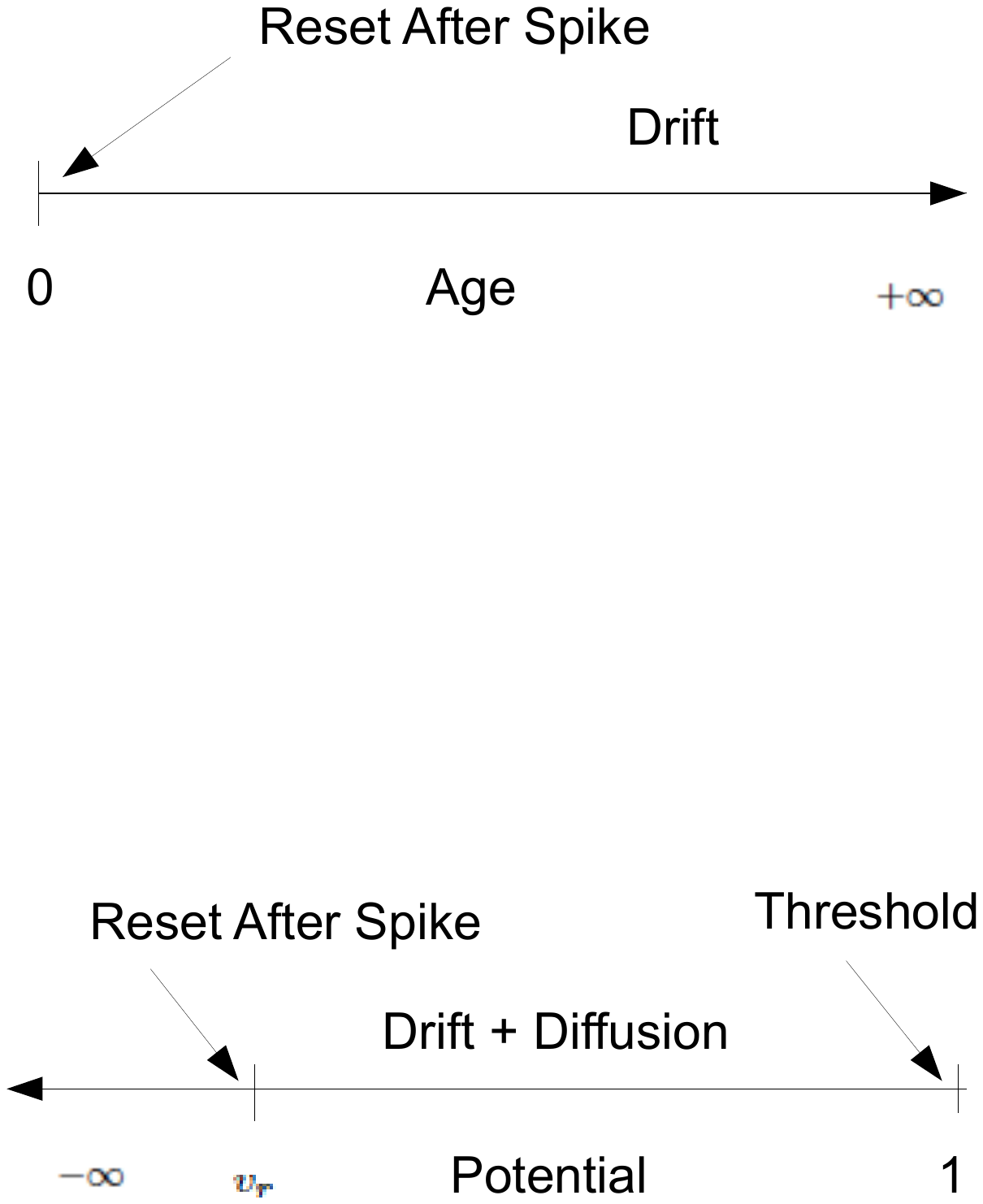}
   \caption{Schematic representation of the state space of  the AS equation (\ref{age}). }\label{Age_Potential}
      \end{center}
\end{figure}
Because once a neuron triggers a spike, its  age is reset to zero, we get the natural boundary condition
\begin{equation*}
\overbrace{n(t,0)=r(t)}^{\text{Reset}},\quad \forall t>0,
\end{equation*}
where $r(t)$ is the firing rate and is given by
\begin{equation}
\label{firing}
r(t)= \int_{0}^{+\infty}S(a)n(t,a) \,da,\quad  \forall t\geq 0.
\end{equation}
An initial distribution is assumed known:
\be\label{id}
n(0,a)=n_0(a), \quad \forall a>0.
\ee
In the above equations, $S(a)$ stands for the  age-dependent death rate given by (\ref{SF}).
Using the boundary condition and the expression of $r(t)$ given by (\ref{firing}), one can check easily the
conservation property of the equation (\ref{model}) by  integrating it on the interval $(0,\infty)$, so that if the initial
 condition  satisfies
\begin{equation}
\int  _{0}^{+\infty} n_0(a)\,da=1,
\end{equation}
the solution at any $t>0$ satisfies the normalization condition
\begin{equation}
\label{normalisation}
\int  _{0}^{+\infty} n(t,a)\,da=1.
\end{equation}

We present in  Fig. \ref{Fig7} a simulation of the problem (\ref{age})-(\ref{id}).
Again, we have made a comparison between the stochastic
process (blue curve) given by (\ref{SF}) and the evolution
in time of the density function (black curve) given by (\ref{age})-(\ref{id}). The simulation starts
with a Gaussian as initial condition (the first panel of Fig. \ref{Fig7}). Under the influence of the drift term,
the density function advances in  age, which is clearly seen in the  upper plots
of Fig. \ref{Fig7}. After the spiking process, the age of the neuron is reset to zero. The effect is well perceived
 in the lower panels of Fig. \ref{Fig7}. As expected from the model, the density function converges to an equilibrium density
 (see the last panel in Fig. \ref{Fig7}).
\begin{figure}[t]
\begin{center}
    \includegraphics[width=12.5cm]{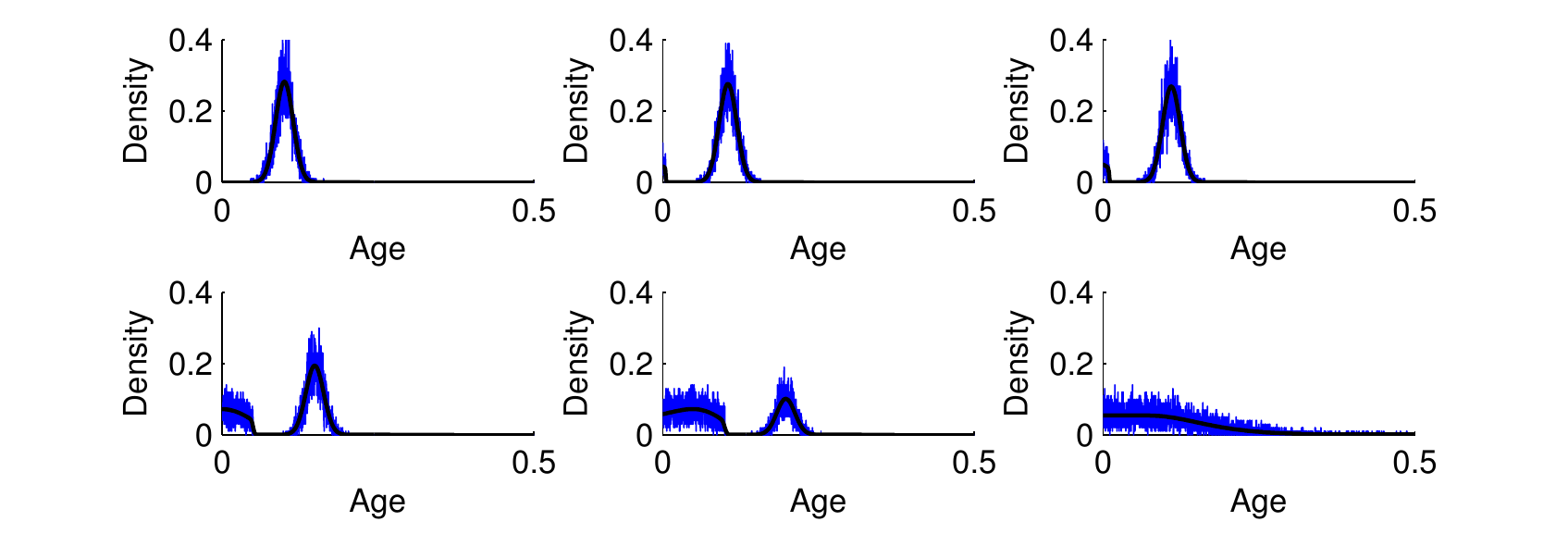}
   \caption{Simulation of the AS model (\ref{age})-(\ref{id}). Inhere, the black curve represents
   the simulation of the AS model and the blue curve  the simulation of the stochastic process (\ref{TEM}).
   A gaussian was taken as initial condition; the parameters of the simulation are: $v_r=0.7$, $\mu=5$, $\sigma =0.1$.
   The six plots in the figure show the evolution in time of the solution at $t=0$, $t=0.1$, $t=0.3$, $t=0.5$, $t=0.7$, $t=7$.}\label{Fig7}
      \end{center}
\end{figure}
The stationary state of the AS model can be easily computed; denoting by $r_\infty$ the stationary firing rate, we get:
\begin{equation*}
\displaystyle n_{\infty}(a)=r_{\infty}e^{-\int_0^a S( s)\,ds},
\end{equation*}%
and, if we take into account the normalization condition, we obtain the expression of the stationary firing rate

\begin{equation*}
r_{\infty}^{-1}=\int_0^{+\infty}e^{-\int_0^a S( s)\,ds}\,da.
\end{equation*}%
Since $\displaystyle e^{- \int_0^a S( s)\,ds}$ is the expression of the survivor function, its integral
over $(0,\infty)$ has the interpretation of the mean firing time, therefore the last
relation says nothing else than the fact that the stationary firing rate equals to
the inverse of the mean firing time.

\section{A theoretical link between the AS and FP problems}
In this section, we present our main result that introduces an analytical link between the two formalisms,
that states that there exists an integral transform that translates the solution to the  problem
(\ref{age})-(\ref{id}) into the solution to (\ref{model})-(\ref{IC}).

\begin{proposition}\label{ITransformation}
Let $p$ a solution to (\ref{model})-(\ref{IC}) and $n$ a solution to (\ref{age})-(\ref{id}),
and $p_0(v)$ and $n_0(a)$ two corresponding
initial distributions.
Then, if $p_0$ and $n_0$  satisfy
\begin{equation*}\label{ir}
p_0(v)= \int_{0}^{+\infty}\frac{q(a,v) }{\int_{-\infty}^{1} q(a,w) \,dw} n_0(a) \,da,
\end{equation*}%
the following relation holds true:
\begin{equation}
\label{C1706}
p(t,v)= \int_{0}^{+\infty}\frac{q (a,v) }{\int_{-\infty}^{1} q(a,w) \,dw} n(t,a) \,da.
\end{equation}%
Here, $q (a,v)$ is the solution to (\ref{FP})-(\ref{ic}).
\end{proposition}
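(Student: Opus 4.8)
The plan is to define the candidate function
\[
\tilde p(t,v) \;\de\; \int_{0}^{+\infty}\frac{q(a,v)}{\int_{-\infty}^{1} q(a,w)\,dw}\, n(t,a)\,da,
\]
and to verify that $\tilde p$ satisfies the FP problem (\ref{model})--(\ref{IC}); uniqueness of solutions to that problem (which follows from the existence/representation theory invoked via \cite{carrillo02}) then forces $\tilde p \equiv p$. Write $\Sigma(a)\de \int_{-\infty}^{1} q(a,w)\,dw$ for the survivor function, so the kernel is $K(a,v)\de q(a,v)/\Sigma(a)$, and note from (\ref{SF}) that $\Sigma'(a) = -S(a)\Sigma(a)$, hence $\partial_a K(a,v) = \partial_a q(a,v)/\Sigma(a) + S(a)K(a,v)$.

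First I would compute $\partial_t \tilde p$ by differentiating under the integral and using the age equation (\ref{age}) in the form $\partial_t n(t,a) = -\partial_a n(t,a) - S(a)n(t,a)$; an integration by parts in $a$ transfers the $\partial_a$ onto $K$, producing a boundary term at $a=0$ equal to $K(0,v)n(t,0) = \delta(v-v_r)\,r(t)$ (using the initial condition (\ref{ic}) $q(0,v)=\delta(v-v_r)$, so $\Sigma(0)=1$, and the reset boundary condition $n(t,0)=r(t)$). The boundary term at $a=+\infty$ vanishes by the decay $q(a,v)\sim e^{-\lambda a}q(v)$ noted earlier. The surviving integrand becomes $\bigl(\partial_a K(a,v) - S(a)K(a,v)\bigr)n(t,a) = \bigl(\partial_a q(a,v)/\Sigma(a)\bigr)n(t,a)$, and now I substitute the FP-in-age equation (\ref{FP}) for $\partial_a q$, namely $\partial_a q(a,v) = -\partial_v[(\mu-v)q(a,v)] + \tfrac{\sigma^2}{2}\partial_{vv}q(a,v)$. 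Pulling the $v$-derivatives out of the $a$-integral reconstructs exactly $-\partial_v[(\mu-v)\tilde p(t,v)] + \tfrac{\sigma^2}{2}\partial_{vv}\tilde p(t,v)$, and combined with the reset boundary term this yields (\ref{model}) with firing rate $r(t)$. The boundary condition (\ref{BC1}) $\tilde p(t,1)=0$ follows from (\ref{ba}) $q(a,1)=0$; condition (\ref{BC2}) follows similarly from (\ref{br00}) by dominated convergence; and the initial condition $\tilde p(0,v)=p_0(v)$ is exactly the hypothesis imposed on $p_0$ and $n_0$.

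The step I expect to be the main obstacle is the rigorous handling of the singular objects: $q(0,\cdot)=\delta(v-v_r)$ is a measure, not a function, so the kernel $K(a,v)$ is singular as $a\to 0^+$, and the identification of the $a=0$ boundary term with $\delta(v-v_r)r(t)$ must be done in the distributional sense (testing against smooth functions of $v$), together with justifying the differentiation under the integral sign and the integration by parts near $a=0$ and $a=\infty$. One should either work with the weak formulation of (\ref{model}) throughout, or regularize the initial datum $q(0,\cdot)$ by a mollifier $\delta_\e$, carry out the computation for the smooth problem, and pass to the limit $\e\to 0$ using the continuity of the solution map. It also needs to be checked that $r(t)$ as it arises from the age model, $r(t)=\int_0^\infty S(a)n(t,a)\,da$ (see (\ref{firing})), is consistent with the FP flux definition (\ref{firingP}) $r(t) = -\tfrac{\sigma^2}{2}\partial_v \tilde p(t,1)$; this is essentially a reprise of the conservation-law computation, since $\int_{-\infty}^1\partial_a q(a,v)\,dv = -ISI(a) = \Sigma'(a) = -S(a)\Sigma(a)$ ties the threshold flux of $q$ to $S(a)$, and integrating $\tilde p$ in $v$ against the two expressions for $r$ shows they agree.
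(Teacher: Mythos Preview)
Your proposal is correct and follows essentially the same route as the paper: differentiate in $t$, use the age equation to convert to an $a$-derivative, integrate by parts to pick up the $\delta(v-v_r)r(t)$ boundary term, and invoke the FP-in-age equation for $q$, all carried out in the weak/distributional formulation you correctly identify as necessary. The paper streamlines the bookkeeping by first setting $N(t,a)\de n(t,a)/\Sigma(a)$, which satisfies the pure transport equation $\partial_t N+\partial_a N=0$ (absorbing the $S$-terms you cancel by hand via $\partial_a K - S K = \partial_a q/\Sigma$), and then plugs $\chi(a)=N(t,\cdot)$ directly as the age test function in the weak formulation (\ref{wq}) of $q$.
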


\begin{remark}
The integral transform given by the equation (\ref{C1706}) can be interpreted with the help of probability theory.
Since the integral $\int_{-\infty}^{1}q(a,w) \,dw$ is the survivor function and $q (a,v)$ is the probability density for
a neuron to be at  age $a$ and at potential $v$, the kernel of the transform can be interpreted as the probability density
for a neuron to be at potential value $v$ given that it survived up to   age $a$.  The solution $n(t,a)$
denotes the density of population at time $t$ in state $a$, the integral over the whole possible states $a$ of the
kernel multiplied by the density $n$ gives indeed the density of population at time $t$ in the state $v$.
\end{remark}

\begin{remark}
In proposition \ref{ITransformation}, the integral transform is given in the sense of distributions, as we shall
define it bellow.
\end{remark}

Let us  show for the beginning that the integral in (\ref{C1706}) is well defined.\\
If we denote by $N(t,a)=\frac{n(t,a)}{\int_{-\infty}^1 q(a,v)\dd v}$, since
\bda
\frac{\partial}{\partial a}\frac{n(t,a)}{\int_{-\infty}^1 q(a,v)\dd v}&=&
\frac{\frac{\partial}{\partial a}n(t,a)}{\int_{-\infty}^1 q(a,v)\dd v}
+n(t,a)\left(\frac{\partial}{\partial a}\frac{1}{\int_{-\infty}^1 q(a,v)\dd v}\right)\\
&=&\frac{\frac{\partial}{\partial a}n(t,a)}{\int_{-\infty}^1 q(a,v)\dd v}
+S(a)\frac{n(t,a)}{\int_{-\infty}^1 q(a,v)\dd v},
\eda
where we have used the definition of $S(a)$,
one can see that $N$ is solution to the following system:
\be\label{AAS}
\left\{\begin{array}{ll}
\frac{\partial}{\partial t}N(t,a)+\frac{\partial}{\partial a}N(t,a)=0,\\
N(t,0)=r(t)=\int_0^\infty ISI(a)N(t,a)\dd a,\\
N(0,a)=\frac{n_0(a)}{{\int_{-\infty}^1 q(a,v)\dd v}}.
\end{array}\right.
\ee
The system above can be easily integrated
\bd
N(t,a)=\left\{\begin{array}{ll}
                     N_0(a-t),\quad a>t,\\
                     r(t-a),\quad t\geq a,
              \end{array}
       \right.
\ed
and the regularity of the solution is dictated by the regularity
of the initial condition. In particular, if $N_0\in L^1(0,\infty)$ then $ N(t,\cdot)\in L^1(0,\infty)$.
Also, choosing $n_0$ such that $N_0\in H^1(0,\infty)$, and as soon as  $N_0(0)=\displaystyle\int_0^\infty ISI(a)N_0(a)\dd a$,
we obtain that  $ N(t,\cdot)\in H^1(0,\infty)$.

On the other hand, $q(a,v)$ is the solution to (\ref{FP})-(\ref{ic}), which is a parabolic
equation with zero right hand side and homogeneous boundary conditions, therefore its exponential convergence towards
zero as $a\ra \infty$ for a.e. $v\in(-\infty,1]$ is immediate. We therefore can assert that, for every $\varepsilon$,
\bd
q(a,v)<c(v),\:\: a.e.\:\: v\in(-\infty,1], a>a_\varepsilon.
\ed
Since the product of a function from $L^\infty$ with an $L^1$ function is integrable,
we have then that the transform (\ref{C1706}) is well defined.

We also point out that, due to the large time behavior of $q$, we have that
 \bd
 \int_0^\infty S(a)\dd a=+\infty,
 \ed
condition which is known in age structured systems theory to imply that
$n(t,a)$ tends to zero as $a$ tends to infinity (which can be easily seen by simply integrating (\ref{age}) ).
We have chosen to work
on $[0,\infty)$ as the age interval, but one could have chosen, exactly as
in AS systems theory, to work on a finite interval $[0, A_{\max}]$, where $A_{max}$ is the maximal age that can be reached.
In this context, the condition that on the age interval, the integral of mortality rate to be infinity, has the biological
interpretation that the density of the population at ages bigger than maximal one is zero, therefore the meaning
of maximal age is exact. Of course, in our context, it would mean that all the neurons would have fired before
reaching this maximal value.
Also, let us notice that the system in $n$ has a classical solution on the defined domain; since the mortality rate
does not depend explicitly on $t$, the derivatives with respect to $t$ and $a$ exist in classical sense.\\
Before starting the proof, let us make some considerations over the  solutions
to the systems (\ref{model})--(\ref{IC}), respectively
(\ref{FP})-(\ref{ic}). We shall consider weak solutions to both systems in the sense introduced in \cite{carillo01}, namely:

\begin{definition}\label{wp}
A pair of nonnegative functions $(p,r)$ such that
 $$p\in L^2\left( 0,T ; L^2_+(-\infty,1)\right), r\in L^2_{+}(0,T)$$
is a solution to (\ref{model})--(\ref{IC}) if, for any test functions $\varphi(t,v) \in L^2([0,T]\times(-\infty,1])$ such that
$$\frac{\partial^2}{\partial v^2}\varphi(t,v), \frac{\partial}{\partial t}\varphi(t,v),
(\mu- v)\frac{\partial}{\partial v}\varphi(t,v)\in L^2((0,T)\times(-\infty,1)), \varphi(T,v)=0,$$ the
following relation takes place:
\bea\label{d1}
&&\nonumber\int_0^T\int_{-\infty}^1 p(t,v)\left[-\frac{\partial\varphi(t,v)}{\partial t}-(\mu-v)\frac{\partial\varphi(t,v)}{\partial v}
-\frac{\sigma^2}{2}\frac{\partial^2\varphi(t,v)}{\partial v^2}\right]\dd v\dd t\\
&&=\int_0^T r(t)[\varphi(t,v_r)-\varphi(t,1)]\dd t+\int_{-\infty}^1 p_0(v)\varphi(0,v)\dd v .
\eea
\end{definition}
As functions of the form $\Phi(v)\Psi(t)$ with $\Phi(v)\in L^2(-\infty,1)$ such that
$$(\mu-v)\Phi'(v), \Phi''(v)\in L^2(-\infty,1),$$
and $\Psi(t)\in L^2(0,T)$ with $\Psi'(t)\in L^2(0,T), \Psi(T)=0$ are a dense subset of the test functions
in definition \ref{wp}, we will restrict (\ref{d1}) to
\bea\label{d2}
&&\nonumber\int_0^T\int_{-\infty}^1 p(t,v)\left[-\Psi'(t)\Phi(v)-\Psi(t)(\mu-v)\Phi'(v)
-\frac{\sigma^2}{2}\Psi(t)\Phi''(v)\right]\dd v\dd t=\\
&&\int_0^T r(t)[\Phi(v_r)-\Phi(1)]\Psi(t)\dd t+\int_{-\infty}^1 p_0(v)\Phi(v)\Psi(0)\dd v,
\eea
which gives the expression of the distributional derivative with respect to $t$:
\bea\label{d3}
&&\frac{\partial}{\partial t}\int_{-\infty}^1 p(t,v)\Phi(v)\dd v\\
&&\nonumber=\int_{-\infty}^1p(t,v)\left[(\mu-v)\Phi'(v)
+\frac{\sigma^2}{2}\Phi''(v)\right]\dd v
+ r(t)[\Phi(v_r)-\Phi(1)],
\eea
again, for all the test functions $\Phi$ defined as above.\\

In the same way we will use a weak formulation for (\ref{FP}) as
\bea\label{wq}
&&\nonumber\int_0^\infty\int_{-\infty}^1 q(a,v)\left[-\chi'(a)\Phi(v)-\chi(a)(\mu-v)\Phi'(v)
-\chi(a)\frac{\sigma^2}{2}\Phi''(v)\right]\dd v\dd a=\\
&&\Phi(v_r)\chi(0)-\Phi(1)\int_0^\infty\chi(a)ISI(a) \dd a,
\eea
with   $\chi(a)\in L^2(0,\infty)$ and $\chi'(a)\in L^2(0,\infty)$ and
$\Phi(v)$ as in the previous definition.\\

We can now proceed with our proof.

\begin{proof}

Let us notice for the beginning that the AS system in $n(t,a)$ can be formulated equivalently as (\ref{AAS})
in terms of the new variable
\bd
N(t,a)=\frac{n(t,a)}{\int_{-\infty}^1q(a,v)\dd v}.
\ed

Moreover, using this notation, the integral transform reads:
\be\label{ADP}
p(t,v)=\int_0^\infty q(a,v)N(t,a)\dd a.
\ee
In order to show that the above formula defines a solution to the FP system,
let us apply the distributional derivative to (\ref{ADP}) and show that it satisfies (\ref{d3}):
\bd
\frac{\partial}{\partial t}\int_{-\infty}^1 p(t,v)\Phi(v)\dd v=
\int_{-\infty}^1 \int_0^\infty q(a,v)\frac{\partial}{\partial t}N(t,a)\Phi(v)\dd a\dd v.
\ed
Using the fact that $N(t,a)$ is solution to (\ref{AAS}), the last equation becomes:
\bd
\frac{\partial}{\partial t}\int_{-\infty}^1 p(t,v)\Phi(v)\dd v=
-\int_{-\infty}^1 \int_0^\infty q(a,v)\frac{\partial}{\partial a}N(t,a)\Phi(v)\dd a\dd v.
\ed
Let us turn now to the definition of the weak solution $q(a,v)$ given by (\ref{wq}). Noticing that, under proper
assumptions over the initial state $N_0$,
for each t arbitrary but fixed, $N(t,a)$ and $\frac{\partial}{\partial a}N(t,a)$ are in $L^2(0,\infty)$, we
can apply the definition for $q$ by choosing $\chi(a)$ as $ N(t,\cdot)$. Then we get that
\bda
&&-\int_{-\infty}^1 \int_0^\infty q(a,v)\frac{\partial}{\partial a}N(t,a)\Phi(v)\dd a\dd v\\
&&=\int_{-\infty}^1 \int_0^\infty N(t,a)q(a,v)\left[(\mu-v)\Phi'(v)
+\frac{\sigma^2}{2}\Phi''(v)\right]\dd a\dd v\\
&&+\Phi(v_r)N(t,0)-\Phi(1)\int_0^\infty N(t,a)ISI(a) \dd a.
\eda
Since $N(t,0)=r(t)=\displaystyle\int_0^\infty N(t,a)ISI(a)\dd a$, the last two terms in the expression above give
\bd
r(t)[\Phi(v_r)-\Phi(1)]
\ed
and therefore, we have obtained that
\bda
&&\frac{\partial}{\partial t}\int_{-\infty}^1 \int_0^\infty q(a,v)N(t,a)\Phi(v)\dd a\dd v\\
&&=\int_{-\infty}^1 \int_0^\infty q(a,v)N(t,a)\left[(\mu-v)\Phi'(v)
+\frac{\sigma^2}{2}\Phi''(v)\right]\dd a\dd v+r(t)[\Phi(v_r)-\Phi(1)],\\
\eda
which is exactly (\ref{d3}) for $p(t,v)$ given by (\ref{ADP}), which completes our proof.

\end{proof}

The integral transform gives a corresponding relation between the two stationary states.
Due to the form of the  age-dependent death-rate $S$,
the stationary density of the AS systems is completely determined by the solution to (\ref{FP})-(\ref{ic}):

\begin{equation*}
n_{\infty}(a)=\dfrac{ \int _{-\infty}^1q (a,v)\,dv}{\int_0^{+\infty}\int _{-\infty}^1q (a,v)\,dv\,da}.
\end{equation*}%
Using now the integral transform, we obtain that the stationary solution to (\ref{model})-(\ref{IC})
satisfies
\begin{equation*}
p_{\infty}(v)=\dfrac{ \int _0^{+\infty}q (a,v)\,da}{\int_0^{+\infty}\int _{-\infty}^{1}q (a,v)\,dv\,da}.
\end{equation*}%
Since we have the relation
\bd
r_{\infty}= \frac{1}{\int_0^{+\infty}\int _{-\infty}^{1}q (a,v)\,dv\,da},
\ed
one may check directly that the above formula  is indeed a solution to the stationary problem of the potential-structured
system by multiplying (\ref{FP}) by $r_\infty$ and integrating  it on $(0,\infty)$.

\section{Asymptotic behavior}
In the previous section, we have shown that there is an integral transform relating the solution
of the time elapsed model
and the solution of the FP equation. Our integral transform goes that way: for a density
in  age, $n(t,a)$, one can associate a corresponding solution $p(t,v)$ to the (\ref{model})-(\ref{IC}).
Let us define the operator

\begin{equation*}
\begin{array}{ccccc}
F& : & L^1_+(0,+\infty) & \to & L^1_+(-\infty,1)  \\
& & n_\infty & \mapsto & p_\infty \\
\end{array}
\end{equation*}%
defined by
\begin{equation*}
p_\infty(v)= \int_{0}^{+\infty}\frac{q(a,v) }{\int_{-\infty}^{1}q(a,w) \,dw} n_\infty(a) \,da,
\end{equation*}%
where $p_\infty$ and $n_\infty$ are the stationary solutions to the potential- respectively age- structured systems.
In the previous section, we have shown that, as soon as the initial conditions $p_0$ and $n_0$ are related by $F$,
the whole trajectories $p(\cdot,v)$ and $n(\cdot,a)$ are also related by $F$. Here we show that in the case the relation
between initial
conditions is not satisfied, $F$ transforms the known convergence  of $n$ to $n_\infty$ for $t \rightarrow \infty$ in the
convergence of $p$ to $p_\infty=F n_\infty$. This gives an additional way to study the behavior of $p$ for large time, already studied
in  \cite{carillo01}, \cite{carrillo02} by other means.

\begin{proposition}\label{SetA}
For all initial conditions $p_0$ belonging to $L^1_+(-\infty,1)$, the solution $p$ of the
potential structured problem (\ref{model})-(\ref{IC})
converges to  $Fn_\infty$ where $n_\infty$ is such that, for any initial condition, the solution $n$ to (\ref{age}), verifies
\begin{equation*}
 \lim\limits_{t \to +\infty}  \Vert n( t,\cdot) -n_\infty(\cdot)\Vert_{L^1_+(0,\infty)} =0.
\end{equation*}%
\end{proposition}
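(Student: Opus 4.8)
The plan is to split the argument according to whether the datum $p_0$ lies in the range of the transform, and to reduce the general case to the known long-time behaviour of the FP equation together with the stationary identification already obtained just above the statement. First I would record that the operator $F$,
\[
(Fn_0)(v)=\int_0^\infty \frac{q(a,v)}{\int_{-\infty}^{1}q(a,w)\,dw}\,n_0(a)\,da,
\]
is linear, a contraction for the $L^1$ norm (its kernel is nonnegative and integrates to one in $v$), and mass preserving on nonnegative data. Hence, if $p_0=Fn_0$, Proposition~\ref{ITransformation} gives $p(t,\cdot)=Fn(t,\cdot)$ for every $t$, and
\[
\|p(t,\cdot)-Fn_\infty\|_{L^1(-\infty,1)}=\|F\big(n(t,\cdot)-n_\infty\big)\|_{L^1}\le\|n(t,\cdot)-n_\infty\|_{L^1(0,\infty)}\longrightarrow 0,
\]
so the conclusion holds on $\mathrm{range}(F)$. (We take $\int_{-\infty}^1 p_0=1$, as in the rest of the paper; otherwise the limit is $(\int p_0)\,Fn_\infty$.)

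For a general $p_0\in L^1_+(-\infty,1)$ the previous step does not transfer directly, since $\mathrm{range}(F)$ need not be dense in $L^1_+(-\infty,1)$. The route I would follow is to invoke the known asymptotics of the FP problem: for every nonnegative mass-one datum the solution of (\ref{model})--(\ref{IC}) converges in $L^1$ to the unique stationary density $p_\infty$ of (\ref{StationaryP})--(\ref{StationaryRate}), a fact available through the general relative entropy principle and also from \cite{carillo01,carrillo02}. It then suffices to identify this limit with $Fn_\infty$, which is exactly what was checked just above the statement by multiplying (\ref{FP}) by $r_\infty$ and integrating over $a\in(0,\infty)$, using $q(0,\cdot)=\delta(\cdot-v_r)$, $q(a,\cdot)\to0$ as $a\to\infty$, the absorbing boundary condition at $v=1$, and $r_\infty^{-1}=\int_0^\infty\!\int_{-\infty}^1 q(a,v)\,dv\,da$. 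Combining the two facts gives $p(t,\cdot)\to p_\infty=Fn_\infty$ for arbitrary $p_0$, which is the claim.

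A more self-contained alternative, which I would at least mention, bypasses any external convergence result by using the renewal structure of (\ref{model}). Writing $G_a$ for the drift--diffusion semigroup with the absorbing condition at $v=1$, so that $q(a,\cdot)=G_a\delta_{v_r}$, one has the Duhamel representation $p(t,\cdot)=G_tp_0+\int_0^{t}q(t-s,\cdot)\,r(s)\,ds$, and the threshold flux satisfies the scalar renewal equation $r=r_{p_0}+ISI\ast r$, where $r_{p_0}(t)=-\frac{\sigma^2}{2}\partial_v(G_tp_0)(1)$ and $\int_0^\infty ISI(a)\,da=1$ because $\int_0^\infty S(a)\,da=+\infty$ (every neuron eventually fires). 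Since $\int_0^\infty r_{p_0}(t)\,dt=\int_{-\infty}^{1}p_0=1$, the renewal theorem yields $r(t)\to r_\infty=\big(\int_0^\infty a\,ISI(a)\,da\big)^{-1}$; together with $\|G_tp_0\|_{L^1}\to0$ and a dominated passage to the limit in the convolution, this gives $p(t,\cdot)\to r_\infty\int_0^\infty q(\tau,\cdot)\,d\tau=Fn_\infty$.

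I expect the only real difficulty to be precisely the passage from $\mathrm{range}(F)$ to an arbitrary $p_0$. In the first route this is not analytic but bibliographic: one simply borrows the long-time convergence of the FP equation, everything else (linearity, contractivity of $F$, the stationary identification) being already in place. In the second route the technical points would be (i) making the Duhamel formula rigorous within the weak $L^1$ framework of Definition~\ref{wp} with the singular reset source, (ii) the decay $\|G_tp_0\|_{L^1(-\infty,1)}\to0$, i.e.\ that all the probability is eventually absorbed at the threshold, and (iii) justifying the limit in $\int_0^t q(t-s,\cdot)r(s)\,ds$, which is legitimate thanks to the exponential bound $q(a,v)\sim e^{-\lambda a}q(v)$ already exploited in the paper to make the transform well defined. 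I would present the first route as the proof and keep the renewal argument as a constructive remark.
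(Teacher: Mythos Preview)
Your proposal is correct but follows a different route from the paper. Your primary argument (Route~1) invokes the $L^1$ convergence of the FP solution to its stationary state as an external fact and then identifies the limit with $Fn_\infty$ via the stationary computation done just before the statement. This is logically sound, but the paper's stated aim here is precisely to give an \emph{alternative} derivation of the FP asymptotics through the AS machinery (cf.\ the sentence introducing the proposition), so importing the full convergence from \cite{carillo01,carrillo02} sidesteps what is being demonstrated.

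The paper instead splits $p=\alpha+\beta$ with $\alpha(t,\cdot):=Fn(t,\cdot)$ for an auxiliary AS trajectory $n$ starting from some $n_0$. Since the model is McKendrick--von~Foerster with reproduction number $\mathcal{R}_0=1$, one has $n\to n_\infty$ in $L^1$, and the $L^1$-continuity of $F$ gives $\alpha\to Fn_\infty$; this is exactly your Step~1. For the remainder $\beta$, with initial datum $p_0-Fn_0$, the paper asserts that it satisfies the \emph{sourceless} drift--diffusion equation with absorbing boundary at $v=1$, and obtains $\beta\to 0$ in $L^1\cap L^\infty$ by the change of variables of \cite{carrillo02} reducing it to a half-line heat equation with homogeneous Dirichlet condition. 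Thus the only external analytic input actually used is the decay of the homogeneous absorbed problem, not the full nonlinear FP convergence. Your Route~2 is much closer to this in spirit: your term $G_tp_0$ plays the role of the paper's $\beta$, and the decay $\|G_tp_0\|_{L^1}\to 0$ you flag under point~(ii) is precisely the step the paper isolates. The difference is that you reconstruct the reset contribution through the scalar renewal equation $r=r_{p_0}+ISI\ast r$ and the key renewal theorem, whereas the paper handles it by pushing the known AS convergence $n\to n_\infty$ through the continuous operator~$F$.
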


\begin{proof}

The model (\ref{age})-(\ref{id}) is a classical McKendrick-von Foerster model,
 well known in population dynamics, with the particularity that
the age specific mortality and fertility rates are the same. Then, defining the intrinsic reproduction number $\mathcal{R}_0$
as
\begin{equation*}
\mathcal{R}_0=\int_0^{+\infty}S(a)\exp(-\int_0^a S(a')da')\dd a,
\end{equation*}
by direct computations we get
\begin{equation*}
\mathcal{R}_0=1.
\end{equation*}
Then it is well known  that in this case, for $t\rightarrow \infty$, $n$ converges to $n_\infty$ satisfying
\begin{equation}\label{stationary}
 \frac{d }{d a}n_\infty(a)+
S(a)n_\infty(a)=0,  \quad a>0,
\end{equation}
with
\begin{equation}
n_\infty(0)=\int_0^\infty S(a)n_\infty(a)da,
\end{equation}
and
\begin{equation}
 \int_0^\infty n_\infty(a)da=1.
\end{equation}
Let us now define $\alpha(t,v)$ the solution to:

\begin{equation*}
\frac{\partial }{\partial t} \alpha (t,v)+\frac{\partial }{\partial v}[(\mu -v) \alpha (t,v)] -\frac{ \sigma ^2}{2}
  \frac{\partial^2 }{\partial v^2} \alpha (t,v)  = \delta(v-v_r)r(t),
\end{equation*}%
with boundary conditions  similar to (\ref{BC1}), (\ref{BC2}), $r(t)$ being the firing rate relative to (\ref{age})
and the initial condition is given by

\begin{equation*}
\alpha(0,\cdot)=\alpha_{0}=Fn_0.
\end{equation*}%
Then, thanks to Proposition \ref{ITransformation},
\begin{equation*}
\alpha(t,\cdot)=Fn(t,\cdot).
\end{equation*}
The transform $F$ being continuous from to $L^1_+(0,+\infty)$ to $L^1_+(-\infty,1)$, we get
\begin{equation*}
\alpha(t,v) \rightarrow p_\infty(v)=Fn_\infty \mbox{   in  } L^1(-\infty,1),
\end{equation*}
as $t\ra\infty$, where one can check as in Proposition \ref{ITransformation} that $p_\infty(v)$ satisfies
\be
\begin{array}{ll}
\frac{\partial }{\partial v}[(\mu -v) p_\infty (v)] -\frac{ \sigma ^2}{2}
  \frac{\partial^2 }{\partial v^2} p_\infty (v)  = \delta(v-v_r)r_\infty,
  \end{array}
\ee
with
\begin{equation*}
r_\infty=-\frac{\sigma^2}{2}\frac{\partial}{\partial v}p_\infty(1),\\
\end{equation*}
and boundary conditions
\begin{equation*}
\lim\limits_{v \to -\infty} (-\mu +v)p_\infty(v) +\frac{ \sigma ^2}{2}\frac{\partial }{\partial v}p_\infty(v) =0,\quad p_\infty(1)=0.
\end{equation*}
Now let us consider $\beta(t,v)=p(t,v)-\alpha(t,v)$. It satisfies
\begin{equation*}
\frac{\partial }{\partial t} \beta (t,v)+\frac{\partial }{\partial v}[(\mu -v) \beta (t,v)] -\frac{ \sigma ^2}{2}
  \frac{\partial^2 }{\partial v^2} \beta (t,v)  = 0,
  \end{equation*}
  with boundary conditions
  \begin{equation*}
\lim\limits_{v \to -\infty} (-\mu +v)\beta(t,v) +\frac{ \sigma ^2}{2}\frac{\partial }{\partial v}\beta(t,v) =0,\quad \beta(t,1)=0,
\end{equation*}
 and with the initial condition given by:
\begin{equation*}
\beta(0,v)=p_0(v)-Fn_0(a).
\end{equation*}
Using a change of variable similar to the one in \cite{carrillo02}, this equation is transformed in a heat equation on $(-\infty,0)$
with a zero Dirichlet condition in $0$. Then it is clear that $\beta(t,v)$ goes to $0$  as $t\rightarrow \infty$ in
$L^\infty(-\infty,1)\cap L^1(-\infty,1)$, which ends the proof.

\end{proof}

\section{Conclusions and perspectives}

It has been shown in \cite{plesser} that the integrate-and-fire model with stochastic input
can be mapped approximately onto an escape-rate model. Despite the fact that the two systems reproduce the same statistical activity,
no analytical connection between them has been given until now. This paper is intended as a first step in this direction.
We have proven here the existence of
an exact analytical transform of the solution to the AS system into the solution to the F-P system which is an equivalent
description of the NLIF model.
 Our finding highlights the theoretical relationships between the two stochastic processes and explain why the
statistical firing distributions across time are similar for both models, see the red dots in Fig. \ref{AP_T}.
To our knowledge, such a result has not been proven until now.

  As we have pointed it out in the introduction section, there are several advantages in using the AS formalism,
and the main reason is that it has been already well-studied by mathematicians throughout the past decades.
Another advantage in using the age structured model regards its numerical simulations, see Fig. \ref{AP_T}.
While the NLIF model requires the numerical implementation of the Euler-Maruyama scheme,
the escape model can be simulated via a Gillespie-like algorithm. However, the noisy threshold model is probably a little bit
more difficult to relate to the underlying biophysics of a cell. For this crucial point, one would prefer the use of NLIF model
where each parameter of the model can be easily measured by neuroscientists.

We have to stress though that the results obtained here have been proven in the case of a not-connected
neural population, which is a strong simplifying assumption.
The case we considered is known in the framework of renewal systems as a stationary process. A possible extension of the present transform for the case of interconnected neurons remains thus for us an open issue to be investigated. But the most important thing to be investigated, and which is currently in working progress,
is the existence of an inverse transform to the one introduced here.
Nevertheless, in the absence of such an inverse transform, we  proved here that
the set of solutions to the F-P system defined through our transform is an attractor
set of the solutions as $t$ tends to infinity.

The integral transform given in Proposition \ref{ITransformation} has a probability meaning and
this meaning can be interpreted using Bayes' rule. Indeed, the kernel of our transformation can be read out as $\mathcal{P}(v|a)$, the probability to find a neuron at potential $v$ knowing its age $a$. The most important feature
of this kernel is that it is time independent; the very nature of the age $a$
contains all the information about time that is needed to properly define the
integral transform. On contrary, to define an inverse transform, one faces the
problem of having a kernel that must depend on time. Transforming then the solution to
the FP system  into the solution to the AS system is therefore a little bit trickier.
Another important aspect about the nature of the AS formalism is that
the variable $a$ also entails information about the last firing moment. Indeed,
attributing an age to a neuron presupposes that the considered neuron has
already fired an action potential. From our perspective, the membrane potential
variable $v$ does not carry out such knowledge. There is therefore a hidden
information in the AS model that is not present in the FP approach.
It is therefore our belief that, in order to properly define an inverse transform, one
would be forced to assume that the FP initial density shares the information
about the last firing event. A compatibility condition on the initial data is then
required, we therefore believe that $p_0$ and $n_0$ should be related through the same
transform presented in this paper.

The easiest way to formalize all this is probably to write down the Bayes' rule
$$\mathcal{P}(a|v;t) = \mathcal{P}(v|a)*\mathcal{P}(a;t)/\mathcal{P}(v;t) = \mathcal{P}(v|a)*n(t,a)/p(t,v)$$
Note that the time-dependence of $\mathcal{P}(a|v;t)$ is inescapable.
Moreover, the full form of $\mathcal{P}(a|v;t)$ is not useful within the context of an inverse transformation,
since such an inverse transformation is trivial (reducing to $n(t,a)$ on both sides). Given these observations,
it might make sense to assume that the system is close to equilibrium. With this maximum-entropy-like assumption,
we can write a time-independent version of $\mathcal{P}(a|v;t)$:

$$\mathcal{P}(a|v) \approx \mathcal{P}(v|a)*n_{\infty}(a)/p_{\infty}(v)$$
 where $n_{\infty}(a)$ and $p_{\infty}(v)$ are the equilibrium distributions for $n(t,a)$ and $p(t,v)$, respectively.
 These equilibriums can be calculated.  With such reasoning, a perfect theoretical inverse  from the AS model to the FP
 representations can be found, and
an approximate inverse transformation can be constructed using the values of $n$ and $p$ at equilibrium, see Fig. \ref{AP_T}.

\begin{figure}[]
\begin{center}
    \includegraphics[width=6.5cm]{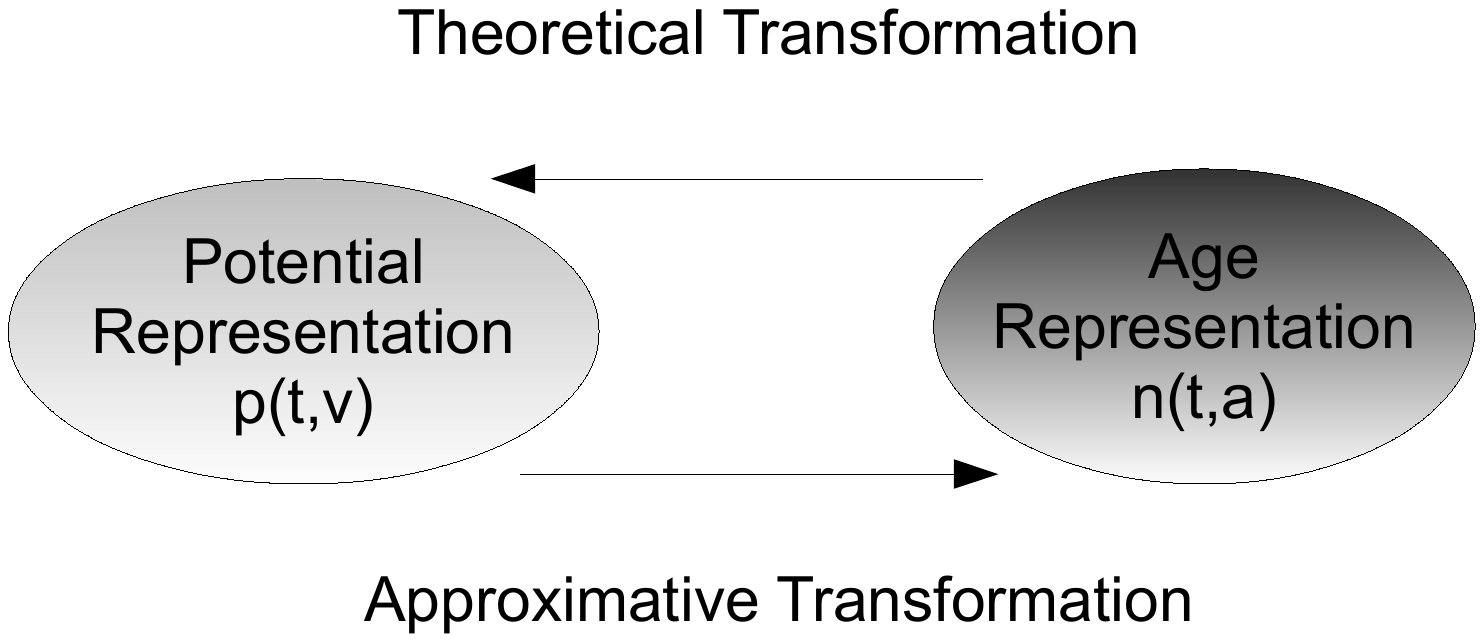}
     \includegraphics[width=9.5cm]{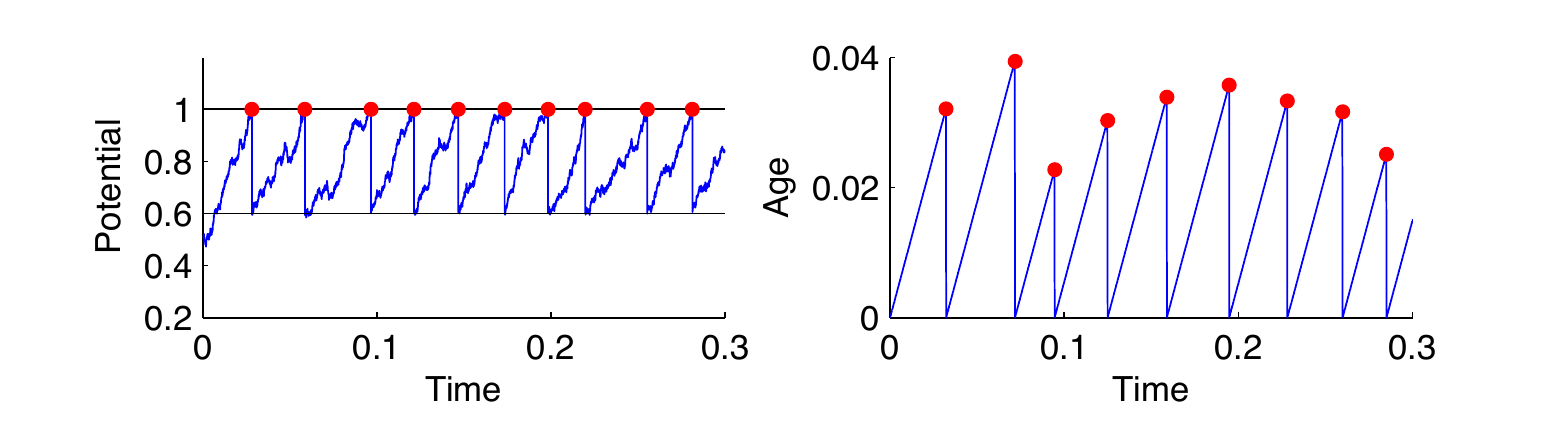}
   \caption{In the top panel, we show a schematic representation of the integral transformation between the two mathematical
    representations of the neural noise. In the bottom panel, we give two simulations illustrating the different mathematical
    treatments of noise in neuroscience context. The red dots indicates the firing time of the cell.
    The time distribution of those dots are
    similar in the two representations. The parameters of the simulation are $v_r=0.6$, $\mu=5$, $\sigma =0.2$.}\label{AP_T}
      \end{center}
\end{figure}

As we stressed in the introduction, the benefit of such
a representation would be the transfer of the analysis of special behaviors of the function $p(t,v)$ which is
the solution of a system that raises technical problems, to the study of the behavior of the AS
system, which is obviously simpler. The only quantity which will be significant then it will be the {\it age dependent death
rate} which will contain all the information needed to give insights of the behavior of the system.

\bibliographystyle{spmpsci}
\bibliography{art_fin}
\end{document}